\newtheorem{lemma}{\textbf{Lemma}}
\newtheorem{proposition}{\textbf{Proposition}}
\begin{document}

\title{Delay Analysis and Optimization in Cache-enabled Multi-Cell Cooperative Networks}
\author{Yaping Sun, Zhiyong Chen, and Hui Liu\\
Cooperative Medianet Innovation Center, Shanghai Jiao Tong University, Shanghai, P. R. China\\
Email: \{yapingsun, zhiyongchen, huiliu\}@sjtu.edu.cn}
\maketitle

\begin{abstract}
Caching at the base stations (BSs) has been widely adopted to reduce the delivery delay and alleviate the backhaul traffic between BSs and the core network. In this paper, we consider a collaborative content caching scheme among BSs in cache-enabled multi-cell cooperative networks, where the requested contents can be obtained from the associated BS, the other collaborative BSs or the core network. 
Novelly, we model the stochastic request traffic and derive a closed form expression for the average delay per request based on multi-class processor sharing queuing theory. We then formulate a cooperative caching optimization problem of minimizing the average delay under the finite cache size  constraint at BSs and show it to be at least NP-complete. Furthermore, we prove it equivalent to the maximization of a monotone submodular function subject to matroid constraints, allowing us to adopt the common greedy algorithm with $1/2$ performance guarantee. A heuristic greedy caching strategy is also developed, achieving a better performance than the conventional greedy solution. Simulation results verify the accuracy of the analytical results and demonstrate the performance gains obtained by our proposed caching scheme.
\end{abstract}

\section{Introduction}
Recent advances of portable devices have stimulated explosive demand for multimedia services, which poses tremendous traffic load in cellular networks. However, the high traffic load mainly consists of duplicately downloading a few number of popular contents. As such, caching the popular contents at the base stations (BSs) is exploited to reduce the duplicate content transmissions between BSs and core networks \cite{content}. In this way, it could not only significantly reduce traffic load but also further improve users' experience of service such as delay.

The cache-enabled cellular network has also triggered new challenges for the network operators. That is where to cache, what to cache and how to cache \cite{5G1}. From an information-theoretic perspective, the authors in \cite{coding} propose a novel coded-caching scheme to minimize the peak traffic load of a single-cell network. In \cite{CC}, the authors analyze the system performance in cache-enabled heterogeneous networks. Authors in \cite{femto} propose approximate caching placement algorithms to minimize the content access delay in FemtoCaching system. 


Different from the above works, we notice that the collaboration among BSs via high-capacity links offers us an additional freedom to devise caching strategies \cite{Mobile3C}. Specifically, the multi-cell cooperation enables users' requests not only to be satisfied directly from its associated BS or the core network via backhaul links but also from other collaborative BSs. Hence, instead of considering the content caching scheme at each BS individually, this paper analyzes the optimal scheme of assigning contents among the BSs to fully take advantage of the finite storage resources and further improve the users' experiences from the global perspective.

Recently, there are several works on the collaborative caching schemes in the literatures \cite{wang1,delay performance,innetwork,online,BScooperation}. The work in \cite{wang1} proposes collaborative algorithms with three objectives: minimizing the inter Internet service providers (ISP) traffic, the intra ISP traffic as well as the total user delays. In its follow-up work \cite{delay performance}, the authors propose a distributed suboptimal algorithm to minimize the sum delay of contents. In \cite{innetwork}, the authors explore the cooperative caching across small BSs and formulate it as a minimization problem of the cost incurred by retrieving files across small BSs and from  the core network. In \cite{online}, the authors propose an online caching algorithm to minimize the overall cost. \cite{BScooperation} maximizes the reward obtained by the cellular network from the uncoded data and the coded data, respectively.

Due to the limit of bandwidth resources, requests arriving at each BS have to queue up for the service. As fetching requested contents from different places (i.e., associated BS, collaborative BSs and the core network) requires transmission delay of different magnitudes, the cooperative caching strategy highly impacts the experienced delay per request including the transmission delay and the waiting delay. However, in all the above related works \cite{innetwork,online,BScooperation,wang1,delay performance}, the sum delay is derived directly from the sum of transmission delay for each content. That is, they have not taken into account the stochastic arrival time of the requests, which proves to be very important in the performance of caching strategies especially in terms of the delay \cite{queuing}.

Therefore, in this paper, we are motivated to adopt a queuing theoretic approach to analyze the average delay per request in a cache-enabled multi-cell cooperative network. We formulate the user request arrival and departure traffics at each BS as a multi-class processor sharing queuing model \cite{queuing3}. A closed form expression for the average delay per request is then derived and a cooperative caching optimization problem is formulated to minimize the average delay. Moreover, we express the optimization problem as the maximization of a nondecreasing submodular function subject to matroid constraints, enabling us to adopt a low-complexity greedy caching algorithm with $1/2$ approximate performance guarantee. A heuristic greedy caching scheme is then proposed, where we not only take into account the content popularity, storage capacity and the user request arrival rate but also the content sizes. Finally, the impacts of the network resources on the performance of the proposed cooperative caching schemes are discussed.

\section{System Model}
In this section, we present the system model and describe the content characteristics. The traffic dynamics of request arrivals and departures in the cache-enabled multi-cell cooperative networks are then described.

\subsection{Network Architecture}
\begin{figure}
\centering
\includegraphics[width=3in,height=2in]{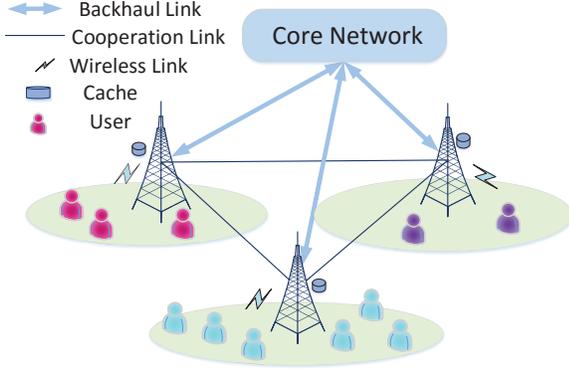}
\caption{Cache-enabled Multi-cell Cooperative Networks}\label{System}
\end{figure}
As illustrated in Fig. \ref{System}, we consider a cache-enabled multi-cell cooperative network with \textit{K} base stations (BSs). Each BS is equipped with a single antenna and is endowed with storage capability, which can store finite number of contents. Let $\mathcal{K} := \{b_k,k=1,2,\cdots,K\}$ denote the set of BSs, and $C_k$ represent the storage capacity in bits of the $k$-th BS. BSs can collaborate with each other via high capacity links, e.g. optical fibre. Meanwhile, each BS is also connected with a core network through backhaul links. Therefore, each requested content can be obtained from the local BS, the collaborative BSs, or the core network, depending on the cache status of all the BSs.

\subsection{Content Characteristics}
We characterize the multimedia contents from two different perspectives, namely, $\{popularity, size\}$. Let $\mathcal{F}$ denote the set of $F$ contents. We use a binary matrix $C:={(c_{k,f})}_{K\times F} \in {\{0,1\}^{K\times F}}$ to denote the cache placement in the multi-cell network, where $c_{k,f}=1$ indicates that the content \textit{f} is cached in BS $b_k$ and $0$ otherwise.
\subsubsection{Content Popularity}
 We use the term \textit{popularity} to describe the probability at which users request a specific content. 
 Let $P_{k,f}$ denote the popularity of content \textit{f} at $b_k$. Considering the overall network, the total popularity for each content, denoted as $p_f = \sum_{k=1}^{K} P_{k,f}$, obeys a Zipf distribution. Specifically,  we have
\begin{equation}\label{pop}
  p_f = \frac{q_f^{-\gamma}}{\sum_{f=1}^{F} q_f^{-\gamma}},
\end{equation}
where $q_f$ denotes the rank of the popularity of content \textit{f} in the descending order and $\gamma\geq0$ is a real constant, characterizing the skewness of the popularity distribution.
\subsubsection{Content Size}
Due to the fact that only a small fraction of multimedia contents have relatively large size and most contents are of limited size \cite{contentexp}, we adopt a long tail style distribution to describe the content size distribution. Based on \cite{kongtao}, the content size is assumed to be an exponential random variable $S_f$ with mean $\bar S$ bits.

\subsection{Traffic Characteristics}
\subsubsection{User Request}
For each BS, we assume the incoming user request stream to be a Poisson process. That is, the request interarrival time is an exponentially distributed random variable. Let $\lambda_k$ denote the aggregate request arrival rate at BS $b_k$. For purpose of deriving closed-form expressions, we consider the stream of requests at each BS to be Independent Reference Model (IRM) \cite{IRM} based on the following assumptions: i) the contents that users request are fixed to the aforementioned content set $\mathcal{F}$; ii) the probability of the request for content $f$ at BS $b_k$, i.e., $P_{k,f}$, is constant and independent of all the past requests. Therefore, the stream of requests at each BS is a sequence of independent and identically distributed (i.i.d) requests. We also assume the request arrival rates and the content preferences across all the cells are different from each other.

\subsubsection{Service Mechanism}
In this paper, we consider each BS serves the stream of user requests sequentially based on first in first out (FIFO) criterion. To serve the request for content $f$ at BS $b_k$, first, BS $b_k$ checks its own cache and delivers content $f$ directly to the user if this content is cached (i.e., $c_{k,f} = 1$), denoted as $Route\  1$. Otherwise, BS $b_k$ fetches the content from a randomly chosen cooperative BS that has cached this content and then transmits it to the user, referred to as $Rroute \ 2$. If all the BSs have not cached this content (i.e., $\sum_{j = 1}^K c_{j,f} = 0$), BS $b_k$ obtains it from the Internet via backhaul link and then delivers it to the user, denoted as $Route\ 3$.

\subsubsection{Service Rate}
Due to the traffic congestion in the core network and the extra transmission delay among cells, without loss of generality, we assume the aggregate transmission bit rates for the above three routes are of diminishing magnitudes. When the request is served via $Route \ 1$, i.e., directly from its affiliated BS, considering the channel is ergodic and the files are always large, taking long enough time to be sent, we assume the average transmission rate is deterministic \cite{servicetime}, denoted as $r_1$ bits/s.
Accordingly, we define the aggregately average transmission rates for $Route \ 2$ and $Route \ 3$ as $r_2=\frac{r_1}{k_2}$ bits/s and $r_3=\frac{r_1}{k_3}$ bits/s, respectively, where $k_3 \geq k_2 \geq 1$.

Moreover, given the content size $S_f$ is exponentially distributed with mean $\bar S$ bits, the corresponding request service time of each route, i.e., $\tau_i = \frac{\bar S}{r_i}$, also follows exponential distribution with mean $\tau_1=\frac{\bar S}{r_1}$ s/request, $\tau_2=\frac{\bar S}{r_2}=k_2\tau_1$ s/request and $\tau_3=\frac{\bar S}{r_3}=k_3\tau_1$ s/request, respectively.

\section{Delay Analysis and Problem Formulation}
In this section, we mainly focus on the analysis of the average delay on per-request basis from the global perspective. Different from the previous works \cite{innetwork,online,BScooperation,wang1,delay performance}, we take into account the arrival rates and the traffic dynamics from a queuing perspective. Thus, we get a closed form expression for the average delay, which facilitates the system design as well as the caching strategy for better network performance. Finally, we formulate the optimization of the cache placement as the minimization of an integer programming problem.

\subsection{Multiclass Processor Sharing Queuing Model}
With respect to the above mentioned three different delivery routes, we formulate the dynamic traffic at each BS as a \textit{multi-class processor sharing queue} (\textit{MPSQ}) \cite{queuing3}. Specifically, we consider each BS as a processor with the user requests as its customers. According to the service mechanism mentioned in Section II-C, for a given content placement of all the BSs, the user requests at each BS can be divided into three groups according to their service routes, i.e., $Route\ 1$, $Route\ 2$ and $Route\ 3$. We assume no group of the user requests has priority over any other and each BS has infinite waiting room.

According to the user requests arrival and service model, the traffic for each request class at any BS can be modeled as an M/M/1 queue. Specifically, Table \ref{table1} presents the request arrival and service rates for each group at BS $b_k$. For denotational convenience, we normalize $P_{k,f}$ as $\bar{P}_{k,f} =\frac{P_{k,f}}{\sum_{f=1}^F P_{k,f}}$.

Corresponding with Table \ref{table1}, we denote with $R_{k,i} = \frac{\lambda_{k,i}}{\lambda_k}$ the probability for requests to be served via the \textit{i-th} route at BS $b_k$. Specifically, $R_{k,1} = \sum_{f=1}^F \bar P_{k,f}c_{k,f}$ indicates the probability for requests satisfied directly from associated BS $b_k$, $R_{k,2} = \sum_{f=1}^F\bar P_{k,f}(1-c_{k,f})\max_{i}c_{i,f}$ measures the probability for requests that have to be served through $Route \ 2$ and $R_{k,3} = \sum_{f=1}^F\bar P_{k,f}\Pi_{i=1}^{K}(1-c_{i,f})$ indicates the local cache loss probability, respectively. 
Accordingly, $\lambda_{k,i}=\lambda_kR_{k,i}$ represents the \textit{i-th} class request arrival rate at BS $b_k$. Obviously, we have $\sum_{i=1}^3 R_{k,i} = 1$ and $\lambda_k = \sum_{i=1}^{3}\lambda_{k,i}$.

\begin{table}
\renewcommand\arraystretch{1.7}
\centering
\caption{TRAFFIC DYNAMIC AT BS $b_k$}\label{table1}
\begin{tabular}{c|l|c}
  \hline
  \textbf{Route} & \textbf{Request Arrival Rate }& \textbf{Service Rate } \\
  \hline
  1 & $\lambda_{k,1} =\lambda_k\frac{\sum_{f=1}^FP_{k,f}c_{k,f}}{\sum_{f=1}^FP_{k,f}}$ & $\mu_1 = \frac{1}{\tau_1}$  \\
  \hline
  2 & $\lambda_{k,2} =\lambda_k\frac{\sum_{f=1}^FP_{k,f}(1-c_{k,f})\max_{i}c_{i,f}}{\sum_{f=1}^FP_{k,f}}$ & $\mu_2 = \frac{1}{k_2\tau_1}$ \\
  \hline
  3 & $\lambda_{k,3} = \lambda_k\frac{\sum_{f=1}^FP_{k,f}\Pi_{i=1}^{K}(1-c_{i,f})}{\sum_{f=1}^FP_{k,f}}$ & $\mu_3 = \frac{1}{k_3\tau_1}$ \\
  \hline
\end{tabular}
\end{table}

%
%

Moreover, we introduce $\rho_k$ as a metric of the traffic intensity at BS $b_k$, defined as:
\begin{equation}\label{stable}
  \rho_k = \sum_{i=1}^3\frac{\lambda_{k,i}}{\mu_i} = \lambda_k\tau_1(R_{k,1} + k_2R_{k,2} + k_3R_{k,3}).
\end{equation}

In this paper, we consider $\rho_k < 1$ as the stability condition. Otherwise, the overall delay will be infinite. In this way, we notice that the traffic intensity at BS $b_k$ is simultaneously related to the cache hit probability, request arrival rates and the transmission capacities for the three routes.

\subsection{Average Delay in A Single Cell}
We first focus on the delay analysis in a single cell with BS $b_k$ and have the following proposition.
\begin{proposition}
Considering a single cell, the average delay per request $T_k$ at the steady state is derived as
\begin{align}
  T_k &= \frac{\rho_k}{\lambda_k} + \frac{\sum_{i=1}^3\frac{\lambda_{k,i}}{\mu_i^2}}{(1-\rho_k)} \nonumber\\
  &= \left. \tau_1(R_{k,1} + k_2R_{k,2} + k_3R_{k,3})\right. \nonumber\\
  &\left.+\frac{\lambda_k\tau_1^2(R_{k,1}+ k_2^2R_{k,2} + k_3^2R_{k,3})}{1-\lambda_k\tau_1(R_{k,1} + k_2R_{k,2} + k_3R_{k,3})}.\right. \label{k-delay}
\end{align}

\end{proposition}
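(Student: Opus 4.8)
The plan is to recognize the MPSQ at $b_k$ as a single M/G/1 queue and apply the Pollaczek--Khinchine (P--K) mean-value formula, rather than literally treating the three per-class streams as independent M/M/1 queues (they are not, since all three classes share one server). First I would note that the arrival stream at $b_k$ is a single Poisson process of rate $\lambda_k$, merely classified by route: by the IRM assumption each incoming request is independently of class $i$ with probability $R_{k,i}=\lambda_{k,i}/\lambda_k$, and its service time is then exponential with rate $\mu_i$. Consequently the service requirement $S$ of a generic request is a hyperexponential (mixture-of-exponentials) random variable, and the queue is exactly M/G/1 with this $S$.

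Next I would compute the first two moments of $S$. Since an $\mathrm{Exp}(\mu_i)$ variable has mean $1/\mu_i$ and second moment $2/\mu_i^2$, conditioning on the class gives $E[S]=\sum_{i=1}^3 R_{k,i}/\mu_i$ and $E[S^2]=\sum_{i=1}^3 R_{k,i}\cdot 2/\mu_i^2$. Multiplying by $\lambda_k$ and using $\lambda_{k,i}=\lambda_k R_{k,i}$ identifies $\rho_k=\lambda_k E[S]=\sum_i \lambda_{k,i}/\mu_i$ with the traffic intensity of \eqref{stable}, and $\tfrac12\lambda_k E[S^2]=\sum_i \lambda_{k,i}/\mu_i^2$. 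The stability hypothesis $\rho_k<1$ ensures a steady state exists and that these quantities are finite.

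The core step is the P--K formula for the mean waiting-in-queue time $W_k$. I would derive it by the standard mean-value argument: by PASTA a Poisson arrival sees the time-stationary state, so its expected wait equals the expected residual service of the job (if any) in progress plus the expected total service of the jobs already queued. The former equals $\rho_k\cdot E[S^2]/(2E[S])$ and the latter equals $L_q\,E[S]$, where $L_q=\lambda_k W_k$ by Little's law applied to the queue; solving $W_k=\rho_k E[S^2]/(2E[S])+\rho_k W_k$ yields $W_k=\lambda_k E[S^2]/\big(2(1-\rho_k)\big)$. Adding the mean service time gives $T_k=E[S]+W_k=\rho_k/\lambda_k+\big(\sum_i\lambda_{k,i}/\mu_i^2\big)/(1-\rho_k)$, which is the first line of \eqref{k-delay}.

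Finally I would substitute $1/\mu_1=\tau_1$, $1/\mu_2=k_2\tau_1$, $1/\mu_3=k_3\tau_1$ and $\lambda_{k,i}=\lambda_k R_{k,i}$ into this compact expression and collect terms to reach the explicit second line of \eqref{k-delay}. The main obstacle is not the algebra but the modeling justification in the first and third steps: making precise that a single Poisson stream with independently marked, class-dependent exponential service legitimately yields a single-server M/G/1 queue, and reconciling this with the ``processor sharing'' / per-class M/M/1 description in Table~\ref{table1}. Once the M/G/1 identification is accepted, the P--K derivation and the final simplification are routine.
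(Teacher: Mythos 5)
Your derivation is correct and is essentially the argument the paper defers to: the paper omits the proof entirely, citing a standard queuing reference, and the stated expression $\rho_k/\lambda_k + \bigl(\sum_{i}\lambda_{k,i}/\mu_i^2\bigr)/(1-\rho_k)$ is exactly the Pollaczek--Khinchine mean delay of an M/G/1 FIFO queue whose service time is the hyperexponential mixture you construct, with your moment computations $\lambda_k E[S]=\rho_k$ and $\tfrac12\lambda_k E[S^2]=\sum_i\lambda_{k,i}/\mu_i^2$ matching term by term. The modeling tension you flag with the ``multi-class processor sharing'' label resolves in favor of your reading: Section II-C specifies FIFO service, and the proposition's formula depends on the second moment of the service time (it is not the insensitive $E[S]/(1-\rho_k)$ that egalitarian processor sharing would give), so the single-server M/G/1--FIFO identification is the only one consistent with the stated result.
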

\begin{proof}
Due to page limitations, we skip the proof in this paper. Please refer to the reference \cite{standardqueuing}.
\end{proof}
\subsection{Optimization Problem}
Based on the analysis of the average delay in a single cell, we can derive the sum average delay per request of the multi-cell system immediately as:
\begin{equation}\label{T}
  T = \frac{1}{\lambda}\sum_{k=1}^K\lambda_kT_k,
\end{equation}
where $\lambda = \sum_{k=1}^K\lambda_k$ denotes the overall user request arrival rate in the multi-cell network.

We can observe from (\ref{k-delay}) and (\ref{T}) that the average delay depends on the cache strategy. Because of the limited caching capacity, we would like to optimize the cache placement in BSs to minimize the sum average delay. Then the optimization problem is formulated as
\begin{align}
 &\min~\ \ \ \ \ \ \ \ \ T \nonumber\\
  & s.t.                                                                                                                                                                                                                                                                                                                                                                                                                                                                                                                                                                                                                                                                                                                                                                                                                                                                                                                                                                                                                                                                                                                                                                                                                                                                                                                                                                                                                                                                                                                                                                                                                                                                                                                                                                                                                                                                                                                                                                                                                                                                                                                                                                                                                                                                                                                                                                                                                                                                                                                                                                                                                                                                                                                                                                                                                                                                                                                                                                                                                                                                                                                                                                                                                                                                                                                                                                                                                                                                                                                                                                                                                                                                                                                                                                                                                                                                                                                                                                                                                                                                                                                                                                                                                                                                                                                                                                                                                                                                                                                                                                                                                                                                                                                                                                                                                                                                                                                                                                                                                                                                                                                                                                                                                                                                                                                                                                                                                                                                                                                                                                                                                                                                                                                                                                                                                                                                                                                                                                                                                                                                                                                                                                                                                                                                                                                                                                                                                                                                                                                                                                         \ \ \sum_{f=1}^F c_{k,f}S_f \leq C_k, \qquad \forall k \in \mathcal{K},\nonumber \\
  &\ \ \hspace{5mm}c_{k,f} \in \{0,1\}, \hspace{3mm}\qquad \quad\forall k \in \mathcal{K}, f \in \mathcal{F}.\label{T1}
\end{align}

The first constraint in (\ref{T1}) represents the storage constraints and the second one indicates the optimization variables are binary (i.e., caching or no caching), respectively.

\section{Approximation Algorithms}
In this section, we first show that the optimization problem (\ref{T1}) is NP-complete in a special case and then solve it via local greedy algorithm. For the general case, we formulate the optimization problem as the maximization of a monotone submodular function subject to matroid constraints, allowing us to adopt a conventional greedy algorithm within $1/2$ performance guarantee. Furthermore, we propose a heuristic greedy algorithm which is shown to achieve significant performance gain in the numerical results.

\subsection{A Special Case}
In this subsection, we consider the case where $k_2 = k_3$ and $\lambda_k = \lambda,\ \forall\ k$. That is, fetching requested contents from cooperative BSs requires the same delay as that from the core network via backhaul link and the user request rates are the same across all the cells.

In this case, $T_k$ is rewritten as:
\begin{align}\label{single}
  T_k{\setlength\arraycolsep{0.5pt}=} \tau_1[(1{\setlength\arraycolsep{0.5pt}-}k_3)R_{k,1} {\setlength\arraycolsep{0.5pt}+}k_3]+ \frac{\lambda\tau_1^2[(1-k_3^2)R_{k,1} + k_3^2]}{1-\lambda\tau_1[(1-k_3)R_{k,1} + k_3]},
\end{align}
where $R_{k,1} = \sum_{f = 1}^{F} \bar P_{k,f}c_{k,f}$ denotes the sum probability for requests to be served directly from associated BS $b_k$. Given the fact that $k_3 > 1$, we note that $T_k$ monotonically decreases with the increase of $R_{k,1}$. Moreover, $\{R_{k,1}, k = 1,\cdots,K\}$ at BSs are independent of each other. Hence, the original minimization content placement problem can be separated into $K$ independent maximization problems with the objective function $R_{k,1}$ respectively. Without loss of generality, we omit the index $k$ and the optimization problem at each BS is equivalent to:

\begin{align}\label{T2}
  &\max~ \hspace{6mm}  \sum_{f = 1}^{F} \bar P_{f}c_{f},  \\
  & s.t.                                                                                                                                                                                                                                                                                                                                                                                                                                                                                                                                                                                                                                                                                                                                                                                                                                                                                                                                                                                                                                                                                                                                                                                                                                                                                                                                                                                                                                                                                                                                                                                                                                                                                                                                                                                                                                                                                                                                                                                                                                                                                                                                                                                                                                                                                                                                                                                                                                                                                                                                                                                                                                                                                                                                                                                                                                                                                                                                                                                                                                                                                                                                                                                                                                                                                                                                                                                                                                                                                                                                                                                                                                                                                                                                                                                                                                                                                                                                                                                                                                                                                                                                                                                                                                                                                                                                                                                                                                                                                                                                                                                                                                                                                                                                                                                                                                                                                                                                                                                                                                                                                                                                                                                                                                                                                                                                                                                                                                                                                                                                                                                                                                                                                                                                                                                                                                                                                                                                                                                                                                                                                                                                                                                                                                                                                                                                                                                                                                                                                                                                                                                \hspace{8mm}\sum_{f=1}^F c_fS_f \leq C, \qquad \forall f \in \mathcal{F},\\
  &\hspace{14mm}c_f \in \{0,1\}, \qquad\quad \forall  f \in \mathcal{F}.
\end{align}

Obviously, the optimization problem is reduced to the well-known \textit{0-1} knapsack problem, which is already proved to be NP-complete \cite{knappack}. Hence, when $k_2 = k_3$, we can obtain the optimal solution via dynamic programming. However, in our case, considering the assumption that the content size is exponentially distributed while the dynamic programming is based on integer weights and values, we adopt the greedy algorithm instead of dynamic programming. Specifically, we first sort the content items in descending order according to their popularity-to-size ratio (i.e., $\frac{\bar P_f}{S_f}$). In each step, we greedily choose the remaining item with the largest popularity-to-size ratio until the cache at BS is filled up. Since each BS conducts the content placement independently, we call it as local greedy caching (LGC) strategy to distinguish with the following greedy algorithms.
\subsection{The General Case}
In this subsection, we formulate the optimization problem defined in (\ref{T1}) as a monotone submodular function subject to matroid constraints as follows.

\begin{lemma}
The constraints of (\ref{T1}) can be mapped into a partition matroid.
\end{lemma}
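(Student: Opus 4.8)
The plan is to exhibit the feasible set of (\ref{T1}) explicitly as the family of independent sets of a partition matroid, and then to verify the matroid axioms. First I would take the ground set to be the collection of all BS--content pairs, $E := \mathcal{K}\times\mathcal{F} = \{(k,f): k\in\mathcal{K},\, f\in\mathcal{F}\}$, where an element $(k,f)$ encodes the decision $c_{k,f}=1$ (cache content $f$ at $b_k$); thus every cache placement matrix $C$ corresponds bijectively to the subset $S_C := \{(k,f): c_{k,f}=1\}\subseteq E$. The natural partition is by base station: set $E_k := \{(k,f): f\in\mathcal{F}\}$ for each $k$, so that $E = \bigcup_{k=1}^K E_k$ is a disjoint union, since each pair $(k,f)$ lies in exactly one block.

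The second constraint in (\ref{T1}), $c_{k,f}\in\{0,1\}$, is automatically encoded by working with subsets of $E$, so the whole content of the lemma rests on the per-BS storage constraint. Here the key modeling step---and the part I expect to be the only real obstacle---is that the storage budget $\sum_{f}c_{k,f}S_f\le C_k$ is a \emph{weighted} (knapsack) constraint, whereas the independent sets of a partition matroid can only enforce a \emph{cardinality} bound $|S\cap E_k|\le M_k$ per block. The two coincide exactly only when all files share a common size; indeed a genuine total-weight budget fails the exchange axiom in general. I would therefore replace the byte budget at $b_k$ by the file-count budget $M_k := \lfloor C_k/\bar S\rfloor$, i.e. the number of mean-sized contents that fit in the cache, and define the independent-set family
\begin{equation*}
  \mathcal{I} := \bigl\{ S\subseteq E : |S\cap E_k|\le M_k,\ \forall k\in\mathcal{K} \bigr\}.
\end{equation*}
This cardinality relaxation is precisely what makes the matroid machinery (and the $1/2$-approximate greedy) applicable; it also explains why the later heuristic, which re-injects the individual sizes $S_f$, can outperform the plain greedy.

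With $(E,\mathcal{I})$ in hand, the remaining work is the routine verification of the three matroid axioms. Emptiness is immediate since $\emptyset\cap E_k=\emptyset$ for all $k$. For the hereditary property, if $S\in\mathcal{I}$ and $S'\subseteq S$, then $|S'\cap E_k|\le|S\cap E_k|\le M_k$, so $S'\in\mathcal{I}$. For the exchange property, take $A,B\in\mathcal{I}$ with $|A|<|B|$; since $|A|=\sum_k|A\cap E_k|<\sum_k|B\cap E_k|=|B|$, by the pigeonhole principle there is a block $E_k$ with $|A\cap E_k|<|B\cap E_k|\le M_k$, hence $|A\cap E_k|<M_k$. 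Any element $e\in(B\setminus A)\cap E_k$, which is nonempty because $B$ has strictly more members than $A$ in this block, can then be added to $A$: block $k$ rises to at most $M_k$ while no other block changes, so $A\cup\{e\}\in\mathcal{I}$. These three properties are exactly the defining axioms of a partition matroid on $E$ with blocks $\{E_k\}$ and capacities $\{M_k\}$, which establishes the claim.
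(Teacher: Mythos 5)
Your proof is correct and follows essentially the same route as the paper: a ground set of BS--content placement pairs partitioned by base station, with a per-block cardinality bound standing in for the storage budget. You are in fact more careful than the paper on the one delicate point --- the paper simply says the bound $N_k$ is ``deduced from the storage capacity'' and cites the FemtoCaching reference for the matroid structure, whereas you explicitly acknowledge that the weighted knapsack constraint $\sum_f c_{k,f}S_f \le C_k$ must be relaxed to a file-count bound for the exchange axiom to hold, and you verify the three matroid axioms directly.
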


\begin{proof}
Based on the definition for the partition matroid in \cite{femto}, let $G_k = \{g_k^1,\cdots,g_k^F\}$ denote the set of all the contents potential to be cached at BS $b_k$. The element $g_k^f$ represents the placement action of caching content $f$ into BS $b_k$. In this way, we get $K$ disjoint sets, i.e., $G_1,\cdots,G_K$. Then we define the ground set $G$ as $G := G_1\cup\cdots\cup G_K =\{g_1^1,\cdots,g_1^F,\cdots,g_K^1,\cdots,g_K^F\}$.

For a given cache placement \textbf{C} denoted in the second constraint of (\ref{T1}), we map it into the set $X$ by including element $g_k^f$ into the set $X$ if and only if $c_{x,f} = 1$. Obviously, $X\subseteq G$. Accordingly, denote with $X \cap G_k$ the collection of contents that have been cached into BS $b_k$. Corresponding with the storage constraint of (\ref{T1}), the cardinality of the set $X\cap G_k$ should satisfy that $\mid X \cap G_k\mid\ \leq N_k$, where $N_k$ is deduced from the storage capacity of BS $b_k$. Then we define the collection of all the possible cache placement sets as:

\begin{equation}\label{I}
  \mathcal{I} = \{X\subseteq G: \mid X \cap G_k\mid\ \leq N_k, \forall k\}.
\end{equation}

Obviously, the structure of $\mathcal{I}$ is similar to that defined in \cite{femto}. Thus, the constraints (\ref{T1}) can be written as partition matroid, denoted by $\mathcal{M} = (G,\mathcal{I})$.
\end{proof}

\begin{lemma}\label{supermodular}
$T$ is monotonically nonincreasing supermodular set function.
\end{lemma}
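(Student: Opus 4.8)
The plan is to exploit that $T=\frac1\lambda\sum_k\lambda_k T_k$ in (\ref{T}) is a nonnegative linear combination of the per-cell delays, and that both ``nonincreasing'' and ``supermodular'' are preserved under nonnegative linear combinations. Hence it suffices to prove that each set function $X\mapsto T_k(X)$ is nonincreasing and supermodular, where $X\subseteq G$ encodes the global placement via $g_j^f\in X\Leftrightarrow c_{j,f}=1$ as in the preceding lemma.

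The first step is to reparametrize. Using $R_{k,1}+R_{k,2}+R_{k,3}=1$ to eliminate $R_{k,2}$, I would write the first- and second-moment terms of (\ref{k-delay}) as affine functions of $R_{k,1}$ and $R_{k,3}$ only:
\[
a_k=k_2+(1-k_2)R_{k,1}+(k_3-k_2)R_{k,3},\quad b_k=k_2^2+(1-k_2^2)R_{k,1}+(k_3^2-k_2^2)R_{k,3},
\]
so that $T_k=\tau_1 a_k+\frac{\lambda_k\tau_1^2 b_k}{1-\lambda_k\tau_1 a_k}=:\phi(a_k,b_k)$. Now $R_{k,1}(X)=\sum_f\bar P_{k,f}\mathbf 1[g_k^f\in X]$ is modular, while each indicator $\prod_i(1-c_{i,f})=\mathbf 1[\text{no BS caches }f]$ is the complement of a set-hitting (coverage) function and is therefore supermodular and nonincreasing; hence $R_{k,3}$ is supermodular and nonincreasing. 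Since $1-k_2\le0$, $k_3-k_2\ge0$, $1-k_2^2\le0$, $k_3^2-k_2^2\ge0$, both $a_k$ and $b_k$ emerge as (constant)$+$(modular)$+$(nonnegative multiple of a supermodular function), i.e. they are nonincreasing and supermodular.

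Monotonicity of $T_k$ then follows readily. Under the stability condition $\rho_k=\lambda_k\tau_1 a_k<1$ one checks $\partial\phi/\partial a_k=\tau_1+\frac{\lambda_k^2\tau_1^3 b_k}{(1-\lambda_k\tau_1 a_k)^2}>0$ and $\partial\phi/\partial b_k=\frac{\lambda_k\tau_1^2}{1-\lambda_k\tau_1 a_k}>0$, so $\phi$ is increasing in each argument; composed with the nonincreasing $a_k,b_k$ this makes each $T_k$, and therefore $T$, nonincreasing.

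The supermodular part is the crux and cannot be settled by a generic composition lemma. The clean piece is the factor $\frac{1}{1-\lambda_k\tau_1 a_k}$: since $1-\lambda_k\tau_1 a_k$ is submodular, nondecreasing and positive while $x\mapsto1/x$ is convex and nonincreasing, a standard argument (comparing $\int(-\phi')$ over the two nested increment intervals) shows a convex nonincreasing transform of a submodular nondecreasing function is supermodular, so this factor is supermodular. The difficulty is the product $b_k\cdot\frac{1}{1-\lambda_k\tau_1 a_k}$: the product of two nonincreasing supermodular functions need not be supermodular, since expanding its discrete second difference leaves cross terms coupling the second difference of one factor with the (negative) single differences of the other, whose sign is unfavorable. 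Moreover $\phi$ itself is not jointly convex in $(a_k,b_k)$ (its Hessian is indefinite), so the two arguments cannot be decoupled. I would therefore establish supermodularity by a direct computation of the mixed second difference $\Delta_{e_1}\Delta_{e_2}T_k(S)$, exploiting that $a_k$ and $b_k$ are affine in the \emph{same} pair $(R_{k,1},R_{k,3})$ and move co-monotonically as $S$ grows: the modularity of $R_{k,1}$ makes its mixed second difference vanish, so only the diminishing supermodular increments of $R_{k,3}$ drive the combinatorial effect, and the convexity of $\frac{1}{1-\rho_k}$ amplifies reductions of $\rho_k$ precisely when the cached set is small, i.e. in the same direction as the combinatorial diminishing returns. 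I expect the main obstacle to be showing the unfavorable cross terms are dominated after these substitutions; this should follow from the sign constraints $k_3\ge k_2\ge1$ and $0<\rho_k<1$, and verifying that inequality is the genuinely technical step.
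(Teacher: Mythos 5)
Your overall architecture matches the paper's: reduce to each $T_k$ via the nonnegative linear combination $T=\frac{1}{\lambda}\sum_k\lambda_kT_k$, show the first- and second-moment terms (the paper's $\rho_k$ and $f_k$, your $\lambda_k\tau_1 a_k$ and $\lambda_k\tau_1^2 b_k$) are nonincreasing and supermodular, and compose with $x\mapsto 1/(1-x)$. Your derivation of the supermodularity of $a_k$ and $b_k$ --- eliminating $R_{k,2}$, observing that $R_{k,1}$ is modular and $R_{k,3}$ is a nonincreasing supermodular (complement-of-coverage) function, and checking the signs $1-k_2\le 0$, $k_3-k_2\ge 0$ --- is correct and in fact cleaner than the paper's five-case enumeration of $\Delta\rho_k(X)$. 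Your treatment of the composition $1/(1-\lambda_k\tau_1 a_k)$ is also sound and parallels the paper's convexity argument for $g(\rho_k)$.

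The gap is exactly where you flag it, but the difficulty you describe is not real, and because you defer ``the genuinely technical step'' the proposal stops one step short of a proof. The product of two \emph{nonnegative}, nonincreasing, supermodular set functions \emph{is} always supermodular, and both factors here are nonnegative: $b_k>0$ and $1/(1-\rho_k)>0$ under the stability condition. Expanding the first difference,
\begin{equation*}
\Delta_e(fg)(X)=f(X\cup e)\,\Delta_e g(X)+g(X)\,\Delta_e f(X),
\end{equation*}
and using, for $X\subseteq Y$, that $\Delta_e g(X)\le\Delta_e g(Y)\le 0$, $\Delta_e f(X)\le\Delta_e f(Y)\le 0$, $f(X\cup e)\ge f(Y\cup e)\ge 0$ and $g(X)\ge g(Y)\ge 0$, gives $\Delta_e(fg)(X)\le\Delta_e(fg)(Y)$ term by term; no unfavorable cross term survives once nonnegativity is kept in view. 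This is the discrete analogue of the fact that a product of nonnegative, nonincreasing convex functions is convex, and it is precisely the product rule the paper invokes (via its citation) to conclude that $f_kg(\rho_k)$ is supermodular. Your observation that $\phi$ is not jointly convex in $(a_k,b_k)$ is a red herring, since joint convexity is never needed once the product rule is available; and the mixed-second-difference computation you propose, whose key inequality you leave unverified, is unnecessary. Supplying the product rule above (or carrying out the verification you postpone) is required to make the argument complete.
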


\begin{proof} Given the mapping from the content placement \textbf{C} into the set $X$, we learn that the objective function $T$ is a set function defined on $X$.

For the monotonicity, we note that a single content will not increase the average delay when added into the placement. Hence, $T$ is a nonincreasing function with respect to $X$.

For the submodularity, we need to prove the increasing return property for $T$, i.e., for any two content placement sets $X \subseteq Y \subseteq G, T(X\cup g_k^f)- T(X) \leq T(Y\cup g_k^f) - T(Y)$. Considering the closedness property for supermodular function that non-negative linear combination of supermodular functions is still supermodular, it is sufficient for us to prove the supermodularity of $T_k$.

For convenience of analysis, we rewrite $T_k$ as:

\begin{equation}\label{Tk2}
  T_k = \frac{1}{\lambda_k}\rho_k + f_kg(\rho_k),
\end{equation}
where $\rho_k = \lambda_k\tau_1(R_{k,1} + k_2R_{k,2} + k_3R_{k,3})$, $f_k = \lambda_k\tau_1^2(R_{k,1} + k_2^2R_{k,2} + k_3^2R_{k,3})$ and $g(\rho_k) = \frac{1}{1-\rho_k}$, respectively.

Recalling the closedness property of supermodular function, we only need to prove the supermodularity of $\rho_k$ and $f_kg_k$ respectively. First, we focus on the proof of supermodularity of $\rho_k$. We define the decremental value of $R_{k,i}, i = 1,2,3$ and $\rho_k$ that a new element $g_j^f$ makes when included into the content placement set $X$ as:
\begin{equation}\label{differencer}
  \Delta R_{k,i}(X) = R_{k,i}(X\cup g_j^f) - R_{k,i}(X), i = 1,2,3,
\end{equation}
\begin{align}\label{differencer}
  &\Delta \rho_k(X) = \rho_k(X\cup g_j^f) - \rho_k(X) \nonumber \\
  &\hspace{12mm}=\lambda_k\tau_1(\Delta R_{k,1}+k_2 \Delta R_{k,2}+k_3 \Delta R_{k,3}).
\end{align}

Notice that the decremental value of $\rho_k$ that the new element $g_j^f$ makes differs with respect to the content placement \textbf{X}. For simplicity, we denote $\bar P_{k,f}$ as $P_f$. With in mind that $R_{k,i}$ denotes the probability for requests served via the \textit{i-th} route, we categorize \textbf{X} according to $g_j^f$ as follows: 1) $j = k$, i.e., caching content $f$ at BS $b_k$. In this regard, $\Delta R_{k,1}(X) = P_f$. If the content $f$ has not been cached in all BSs, i.e., $\sum_{k=1}^K x_{k,f} = 0$, denoting the corresponding set as $X_1 := \{X \mid \sum_{i=1}^K x_{i,f} = 0\}$, we get $\Delta R_{k,2}(X_1) = 0$ and $\Delta R_{k,3}(X_1) = -P_f$. Hence, $\Delta \rho_{k,1}(X_1) = \lambda_k\tau(1-k_3)P_f$. Otherwise, we define the corresponding content placement set $X_2$ as $X_2 := \{X \mid x_{k,f} = 0, max_{i \neq k} x_{i,f} = 1\}$ and get $\Delta R_{k,2}(X_2) = -P_f$, $\Delta R_{k,3}(X_2) = 0$. Thus, $\Delta \rho_{k,2}(X_2) = \lambda_k\tau(1-k_2)P_f$; 2) $j \neq k$. In this respect, $\Delta R_{k,1}(X) = 0$ and the content placement set $X$ is classified into three types. If BS $b_k$ has already cached content $f$, i.e., $x_{k,f} = 1$, the new element $g_j^f$ will not change the value of $\rho_{k}$ since requests for content $f$ at $b_k$ always get satisfied directly from the cache of $b_k$. Denoting the corresponding content placement set as $X_3 := \{X \mid x_{k,f} = 1\}$, we have $\Delta \rho_{k,3}(X_3) = 0$. Otherwise, if other cells have not cached content $f$ neither, indexed as $X_4$, we get $\Delta R_{k,2}(X_4) = P_f$ and $\Delta R_{k,3}(X_4) = -P_f$. Hence, $\Delta \rho_{k,4}(X_4) = \lambda_k\tau(k_2-k_3)P_f$. However, if there is any BS that has already cached content $f$, the new element $g_j^f$ will not change $\rho_k$ since requests for content $f$ at BS $b_k$ always get satisfied from its cooperative BSs. Denote this kind of content placement as $X_5$ and then we has $\Delta \rho_k(X_5)=0$.

We summarize the above analysis as following:
\begin{equation}\label{rou}
 \Delta \rho_{k}(X) = \\
 \begin{cases}
 \lambda_k\tau(1-k_3)P_f & X \in {X_1}\\
 \lambda_k\tau(1-k_2)P_f & X \in {X_2}\\
 \lambda_k\tau(k_2-k_3)P_f & X \in {X_4}\\
    0  &   X \in {X_i, i = 3,5} \\
 \end{cases}
\end{equation}

Considering $1\leq k_2 \leq k_3$, we get $\Delta \rho_{k}(X) \leq 0, \forall X$. Hence, $\rho_{k}(X)$ is a nonincreasing function.

Based on (\ref{rou}), $\forall X \subseteq Y \subseteq S$, $\Delta \rho_{k}(Y)-\Delta \rho_{k}(X)$ is
\begin{align}\label{differrou}
 \Delta  \rho_k(Y)-& \Delta \rho_{k}(X) = \nonumber\\
 &\left\{
 \begin{array}{l}
   0 \qquad\qquad\qquad\ \ \ \ X,Y \in {X_i, \forall i}\\
    \qquad\qquad\ \ \  \ \ \ \text{or}\  X \in {X_5}, Y \in {X_3}\\
 \lambda_k\tau(k_3-k_2)P_f  \ \  X \in {X_1}, Y \in {X_2}
 \\ \qquad\ \  \text{or}\  X \in {X_4}, Y \in {X_i, i = 3,5}\\
 \end{array}
 \right.
\end{align}

From above analysis, $\forall X \subseteq Y \subseteq S$, we obtain that $\Delta \rho_{k}(Y)-\Delta \rho_{k}(X)\geq 0$. Therefore, $\rho_k$ is a monotonically nonincreasing supermodular function. Noting the similar structure of $f_k$ to that of $\rho_k$, we get the nonincreasing property and supermodularity of $f_k$ immediately.

Then, for $g(\rho_{k}) = \frac{1}{1-\rho_{k}}$, we derive its first-order derivative and second-order derivative versus $\rho_k$ respectively as:
\begin{equation}\label{rouderive}
\frac{dg(\rho_{k})}{d\rho_k} = \frac{1}{(1-\rho_k)^2} ,
\end{equation}
\begin{equation}\label{rouderive2}
\frac{d^2g(\rho_{k})}{d\rho_k^2} = \frac{2}{(1-\rho_k)^3} .
\end{equation}

Considering the stability condition $\rho_k < 1$, we get $\frac{dg(\rho_{k})}{d\rho_k} > 0$ as well as $\frac{d^2g(\rho_{k})}{d\rho_k^2} > 0$. Thus, $g(\rho_{k})$ has the same monotonicity as that of $\rho_k$ and is a convex function of $\rho_k$. Therefore, $g(\rho_{k})$ is also a monotonically nonincreasing supermodular function \cite{learninging}. From \cite{Boyd}, we get that $f_kg(\rho_{k})$ is also supermodular.

Hence, $T_k$ is a monotonically nonincreasing supermodular function and the lemma is proved.
\end{proof}

\begin{algorithm}[t!]
\caption{Heuristic Greedy Algorithm}
\label{Algorithm2}
\begin{algorithmic}[1]
\STATE \textbf{Input}. $K,F,\{S_f, \forall f\},\{C_k, \forall k\}$
\STATE \textbf{Output}. Cache placement \textbf{C}
\STATE \textbf{Initialize}. \textbf{C} $\leftarrow$ $(0)_{K\times F}$, $\mathcal{D} \leftarrow \mathcal{K\times F}$, remaining storage size for BS $k$, $F_k = C_k$.
\WHILE{$\mathcal{D}\neq \varnothing$}
\STATE  $(k^*,f^*) \leftarrow argmax_{(k,f) \in \mathcal{D}} \frac{ T(\textbf{C})-T(\textbf{C} \cup d_{k,f})}{S_f}$
\STATE  $c_{k^*,f^*} = 1$
\STATE  $F_{k^*} = F_{k^*} - S_{f^*}$
\STATE  $\mathcal{D} \leftarrow \mathcal{D} \setminus(k^*,f^*)$
 \FOR   {$k = 1:K $}
\FOR{$f = 1:F$}
\IF {$S_f > F_k$}
\STATE $\mathcal{D} \leftarrow \mathcal{D} \setminus(k,f)$
\ENDIF
\ENDFOR
\ENDFOR
\ENDWHILE
\end{algorithmic}
\end{algorithm}

Based on Lemma 2, we directly get that $-T$ is a nondecreasing submodular function. Hence, the optimization problem defined in (\ref{T1}) is equivalent to the maximization of a nondecreasing submodular function subject to matroid constraints, denoted as:
\begin{align}\label{T2}
 &\max~\ \ \ \ \ \ \ \ \ -T \\
  & s.t.                                                                                                                                                                                                                                                                                                                                                                                                                                                                                                                                                                                                                                                                                                                                                                                                                                                                                                                                                                                                                                                                                                                                                                                                                                                                                                                                                                                                                                                                                                                                                                                                                                                                                                                                                                                                                                                                                                                                                                                                                                                                                                                                                                                                                                                                                                                                                                                                                                                                                                                                                                                                                                                                                                                                                                                                                                                                                                                                                                                                                                                                                                                                                                                                                                                                                                                                                                                                                                                                                                                                                                                                                                                                                                                                                                                                                                                                                                                                                                                                                                                                                                                                                                                                                                                                                                                                                                                                                                                                                                                                                                                                                                                                                                                                                                                                                                                                                                                                                                                                                                                                                                                                                                                                                                                                                                                                                                                                                                                                                                                                                                                                                                                                                                                                                                                                                                                                                                                                                                                                                                                                                                                                                                                                                                                                                                                                                                                                                                                                                                                                                                         \ \ \sum_{f=1}^F c_{k,f}S_f \leq C_k, \qquad \forall k \in \mathcal{K},\label{C3} \\
  &\ \ \hspace{5mm}c_{k,f} \in \{0,1\}, \hspace{3mm}\qquad \quad\forall k \in \mathcal{K}, f \in \mathcal{F}.\label{C4}
\end{align}

A conventional way to solve such maximization problem is via greedy algorithm with $1/2$ performance guarantee \cite{submodular}. Specifically, starting with empty caches, at each step we greedily choose the file that maximizes marginal value (i.e., $T(\textbf{C})-T(\textbf{C} \cup d_{k,f})$) into the cache placement until all the caches are filled. When the marginal value is zero, the algorithm will also stop. Hence, there would be $\lambda_sCK$ iterations on average if all the BSs are of the same storage capacity $C_k = C$. Each iteration consists of evaluating marginal value of no more than $KF$ elements. Each evaluation takes $O(K)$ time. Therefore, the running time for the greedy algorithm would be $O(\lambda_sCFK^3)$. We refer to this conventional greedy caching algorithm as CGC strategy.

Given the the size diversity among contents and the limited storage size, we propose a heuristic greedy caching (HGC) policy illustrated in Algorithm \ref{Algorithm2}, where we evaluate the value of cache element $g_k^f$ by the ratio of the marginal value it brings to its content size. When $k_2 = k_3$, given the fact that $T$ monotonically decreases with $R_1$, we can see our proposed HGC policy would have the same performance as LGC strategy, while the CGC strategy would only achieve the performance of most popular caching scheme (MPC) where we greedily choose the most popular files into the caches. 
\begin{table}[t!]
\renewcommand\arraystretch{1.7}
\centering
\caption{SIMULATION PARAMETERS}\label{table2}
\begin{tabular}{l|c}
  \hline
  \textbf{Parameter} & \textbf{Parameter Value}\\
  \hline
  $F$: \textit{Content\  Number} & $100$  \\
  \hline
 \textit{$\bar S$: Content\  Average\  Size } & $5\ M bits$ \\
  \hline
  \textit{$r_1$: Wireless \ Transmission \ Rate }& $100\ M bps$ \\
  \hline
  $\ \ \ \ \ \ \ \ k_2=\tau_2/\tau_1$ & $4$\\
  \hline
  $\ \ \ \ \ \ \ \ k_3=\tau_3/\tau_1$ & $20$\\
  \hline
  $\lambda_k$: \textit{Request\ Arrival\ Rate}&$0.5\ requests/s $\\
  \hline
  $\gamma$: \textit{popularity parameter} & $0.5$ \\
  \hline
\end{tabular}
\end{table}
\section{Numerical Results}
The performance of the proposed HGC strategy is evaluated in this section. The aforementioned MPC scheme, LGC strategy and CGC policy are also assessed as the performance benchmarks. The typical parameter settings used in the simulation and calculation are shown in Table \ref{table2}.

We first evaluate the accuracy of the derived analytical expressions in Propositon $1$ under the MPC strategy. The simulation results are presented along with the analytical ones in Fig. \ref{size}. It can be seen that the theoretical results are in excellent match with the simulation results. Intuitively, the average delay decreases as the parameter $\gamma$ goes up. The impacts of the network resources, the user request traffics and the transmission capacity among cells on the system performance are then discussed as following.


\textbf{Joint impact of the storage size and the number of cooperative BSs.} In Fig. \ref{CELLNUMBER}, we evaluate the combined effects of the cache sizes and the number of cells on the performance achieved from our proposed HGC algorithm. As expected, $-T$ monotonically increases with the increase of the storage size and the cell number. That is, the larger the cache sizes and the cell number are, the less the average delay is required. However, as one can see, the performance gain is diminishing as the cache sizes and the number of cells increase, which reconfirms the submodularity of $-T$. We also observe that when the network has small caching capability, the increase of the number of collaborative BSs only achieves trivial delay gains, whereas the performance gain is more significant when the cache size becomes larger. This result indicates that \emph{caching facilitates the exploitation of the freedom offered from the cooperation among BSs.}

\begin{figure}
\centering
\includegraphics[width=3.2in]{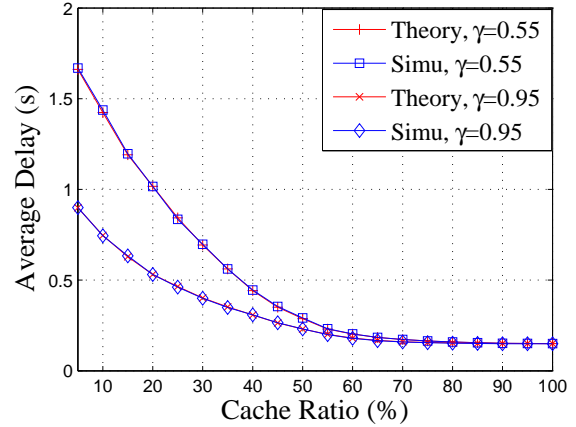}
\caption{Validation of theoretical analysis. We assume each BS has the same storage capacity (i.e., $C_k=C$). Cache ratio ($\%$) means the ratio of the storage size ($C$) to the average sum size of all the contents ($\frac{C}{\bar SF}$). }\label{SRvsNtwithdiffLI} \label{size}
\end{figure}

\begin{figure}
\centering
\includegraphics[width=3.2in]{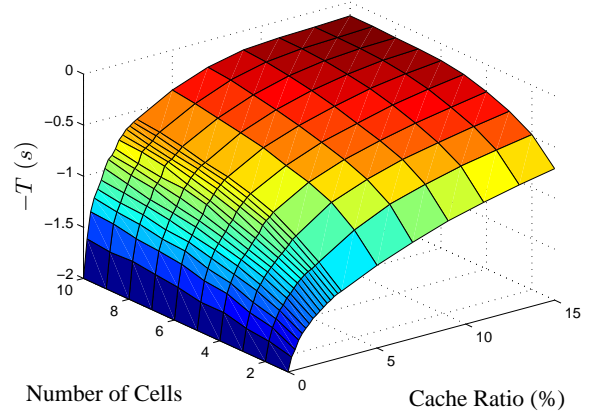}
\caption{Impact of the number of cells and the caching capability.}\label{CELLNUMBER}
\end{figure}

\textbf{Impact of the user request arrival rate.}
We evaluate the impact of the user request arrival rate on the average delay in Fig. \ref{requestdiffer1}. We consider the scenarios where the request arrival rates are heterogeneous across different cells. Here, we set $K=3$, $C = 50\ Mbits$, $F = 100$ and $\lambda_{1}=\lambda_{2}=0.05\ requests/s$. Altering $\lambda_{3}$ from $0.05$ to $1\ requests/s$, we can see that the delay for each content caching strategy increases with $\lambda_{3}$, since larger request rate increases the probability of longer waiting time for each request. It is also observed that the collaborative techniques achieve significant performance gains over the non-cooperative scheme. Our proposed algorithm HGC consistently outperforms the MPC, LGC and CGC schemes, with the gains increasing with $\lambda_{3}$ (up to $65 \%$, $50 \%$ and $33 \%$, respectively). Thus, the HGC policy is much more robust with the increase of user request rates and takes more advantage of the traffic diversity among cells. That is, \emph{the heavier and more diverse the traffic load is, the larger benefit the proposed HGC scheme brings.}



\textbf{Impact of the transmission capacity among BSs.} In Fig. \ref{trans}, we explore how the transmission capacity among BSs affects the performance. As mentioned in Section II, we denote with $k_i$ the transmission delay ratio between $i$-th Route and Route 1, where $1 <k_2 <k_3$. In particular, we keep $k_3 = 20$ constant and change the value of $k_2$ from $1$ to $20$. We see that the average delay increases with $k_2$ since the delay incurred by the service from collaborative BSs becomes larger. Again, our proposed HGC strategy outperforms the other three schemes. As expected, when $k_2 = k_3$, HGC strategy achieves the same performance as the LGC strategy, whereas the performance of CGC scheme reduces to that of the MPC policy. 

\section{CONCLUSION}
In this paper, we take a queuing theoretical approach to analyze the average delay per request in the cache-enabled multi-cell cooperative network. In this way, it allows us to optimize the cooperative caching scheme by taking into account not only the content popularity and storage capacity but also the user request stochastic traffic. By formulating the optimization problem as the maximization of a nondecreasing submodular function subject to matroid constraints, we are able to solve it via the CGC algorithm within a factor of $1/2$ of the optimum. We further propose the HGC algorithm which is shown to achieve significant performance gain compared with the CGC strategy. 
Finally, the valuable insights on the impacts of the key network parameters, the user request traffics and the transmission capacity among cells on the system performance are summarized as following:
\begin{itemize}
  \item The cooperation among cells provides significant delay gains over the non-cooperative scenario.
  \item Content caching facilitates the exploitation of the freedom offered from the cooperation among cells.
  \item The heavier and more diverse the traffic load is, the larger benefit the proposed HGC scheme brings.

\end{itemize}
%
%
%
%
%
%
%
%
%
%
%
%
%
%
\begin{figure}
\centering
\includegraphics[width=3.2in]{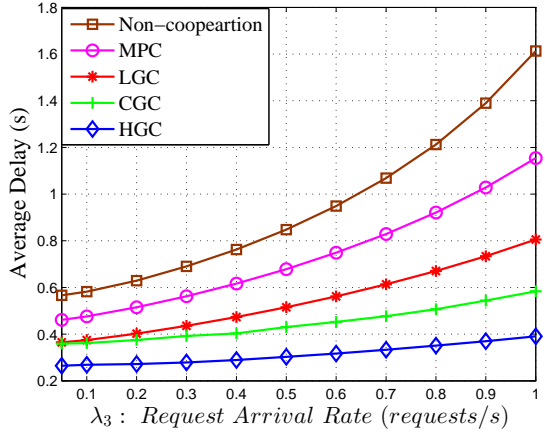}
\caption{Impact of user request arrival rate.}\label{requestdiffer1}
\end{figure}
\begin{figure}
\centering
\includegraphics[width=3.2in]{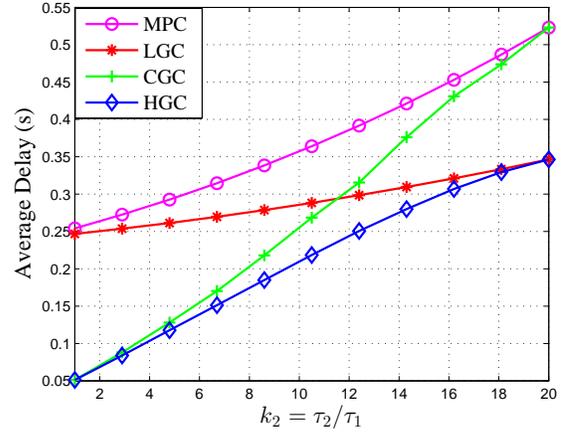}
\caption{Impact of transmission capacity among cells when $K=3$ and cache ratio equals to $40\%$.}\label{trans}
\end{figure}

\bibliographystyle{IEEEtran}
\bibliography{paper}

\end{document}